\newtheorem{thm}{Theorem}[section]
\newtheorem{lem}[thm]{Lemma}
\newtheorem{prop}[thm]{Proposition}
\theoremstyle{definition}
\theoremstyle{remark}
\newtheorem{dftn}{Definition}
\newtheorem{ques}{Question}
\begin{document}
%
\title{Efficient Quartet Representations of Trees and Applications to Supertree and Summary Methods}
%
%
%

\author{Ruth~Davidson,
 MaLyn~Lawhorn,
 Joseph~Rusinko*,
 and~Noah~Weber
\thanks{R. Davidson is with the University of Illinois Urbana-Champaign. email: redavid2@illinois.edu }
\thanks{M. Lawhorn and N. Weber are with Winthrop University. e-mail: lawhornc2@winthrop.edu and webern2@winthrop.edu}
\thanks{J. Rusinko is with Hobart and William Smith Colleges. email: rusinko@hws.edu}
\thanks{*Corresponding Author}}

%
%

\markboth{}%
{}
%


\maketitle

\begin{abstract} Quartet trees displayed by larger phylogenetic trees have long been used as inputs for species tree and supertree reconstruction. Computational constraints prevent the use of all displayed quartets in many practical problems with large numbers of taxa. We introduce the notion of an Efficient Quartet System (EQS) to represent a phylogenetic tree with a subset of the quartets displayed by the tree. We show mathematically that the set of quartets obtained from a tree via an EQS contains all of the combinatorial information of the tree itself. Using performance tests on simulated datasets, we also demonstrate that using an EQS to reduce the number of quartets in both summary method pipelines for species tree inference as well as methods for supertree inference results in only small reductions in accuracy.
\end{abstract} 

\section{Introduction}\label{Introduction}
Phylogenetic reconstruction algorithms turn a single set of input data about a set of taxa into a tree that reflects the evolutionary relationships among the taxa. Due to advances in molecular sequencing technology in recent decades, the input to a phylogenetic reconstruction problem is usually a set of molecular sequences.

Our contribution in this manuscript is useful for \emph{two-step pipelines} that first infer an alignment of the sequence data \cite{muscle,ClustalOmega,PRANK} and then infer a tree from the alignment \cite{RAxML,FastTree2,NJ,FastME2}. Inputs to two-step phylogenomic inference methods are usually aligned molecular sequence data from genomes of taxa obtained as short sequences of nucleotides or proteins referred to as \emph{genes}. The gene alignments are either used to infer \emph{gene trees}, which are then used as the input to phylogenomic methods known as \emph{summary methods}, or concatenated into a single long alignment before applying a phylogenetic inference method to infer a species tree. The concatenation versus summary method approach is the subject of lively debate \cite{Delusion} that lies outside the scope of this paper. 

Combining a collection of gene trees into a single tree representing the relationships among the species is known as the gene-species tree problem \protect\cite{maddison1997gene}. The relationship between gene and species trees can be modeled by the multi-species coalescent (MSC) \cite{Kingman82, PamiloNei, Tajima83}. The MSC provides a theoretical basis for advances (see \cite{IdentifyingRooted, DegnanRosenberg06}) in the development of species tree reconstruction methods such as \cite{mirarab2015astral}.

Reconstructing the entire tree of life may necessitate combining a collection of species trees to construct a \emph{supertree} that reflects the relationships among a larger set of taxa. This process is called \emph{supertree reconstruction.}  Finding an unrooted supertree that is maximally consistent with a set of input trees is computationally difficult: even determining whether a set of unrooted trees is compatible is NP-complete \protect\cite{steel1992}. As a result traditional supertree reconstruction algorithms are currently limited in scale \protect\cite{zimmermann2014bbca,wickett2014phylotranscriptomic}.

One approach to handling the computational challenges in both species tree and supertree reconstruction is through the analysis of four-taxon subtrees known as quartets \cite{origqpuz, warnow, snir, WQMC}.  In this approach one identifies quartet relationships displayed by the individual gene trees or incomplete species trees, and then combines these quartets through a quartet-agglomeration algorithm into a single tree that reflects the observed relationships. Such strategies include heuristics used as summary methods \cite{WQMC} or exact algorithms allowing a constrained set of possible species tree outputs \cite{mirarab2015astral} that are designed to produce a species tree displaying the maximum number of quartets displayed by the set of input trees given to the method.  

In \cite{IdentifyingRooted} it was shown that under the probability distribution induced by the MSC model the most frequent unrooted quartet tree matched the unrooted shape of the species tree.  This gives a theoretical motivation to develop quartet-based methods of species tree inference. In supertree reconstruction it is assumed that the subtrees have consistent topologies with the supertree and thus should share the same set of quartet relationships \protect\cite{supertree}.

Thus quartet-agglomeration remains a popular technique in phylogenetic reconstruction despite the fact that the Maximum Quartet Consistency Problem is known to be an NP-hard optimization problem \cite{steel1992}. Effective heuristics exist for combining quartets such as Quartets MaxCut (QMC) \cite{snir} and the recent modification of QMC, wQMC \cite{WQMC}. QMC and wQMC are popular due to their speed and, as we will discuss in this manuscript, their accuracy under simulation tests. However, there are limits to the size of a set of taxa in an inference problem that can be handled by these methods.  For example, the work of Swenson et al. shows that QMC using all the quartets fails to return an answer using $500$ taxa, as does Matrix Representation with Parsimony (MRP) \protect\cite{ragan1992phylogenetic}.

One strength of quartet-based reconstruction is that in order to reconstruct an $n$-taxon tree one does not need all ${n \choose 4}$ input quartets. Theoretically, a carefully selected set of $n-3$ quartets is sufficient, but this requires knowledge of the correct tree \protect\cite{bocker1999patching}. In practice, some studies have indicated that randomly sampled quartets on the order of $n^3$ are sufficient for reliable reconstruction \protect\cite{snir}.
However, even QMC using quartets sampled via a stochastic method can fail once the number of taxa approaches $1000$  due to the overwhelming number of quartets that must be analyzed \protect\cite{supertree}. This random sampling approach is also used in the biological analysis in \cite{SVD} as well as the simulations in \cite{WQMC}. 

The number of quartets under consideration affects the running time of any phylogenetic inference method that takes quartets as inputs. Therefore it is natural to ask if a small subset of quartets can be used without losing any mathematical information about the tree.

In this paper we propose a method of quartet sampling that is based on the combinatorics of definitive quartets. A collection of trees is called \emph{definitive} if there exists a unique tree that displays all of the trees in the collection. Our method builds on the combinatorial structures developed in \protect\cite{bocker1999patching,linked}, and proposes sampling a particular set of input quartets that we call an \emph{Efficient Quartet System} (EQS).

An EQS is definitive and thus captures all of the phylogenetic signal contained in the input trees. Since QMC is a heuristic algorithm with no theoretical guarantees it does not always return the correct tree even when the input quartets are definitive (see \protect\cite{linked} for a six-taxon example). However, we demonstrate that QMC returns the correct simulated model tree given an EQS as input with extremely high probability. We test the efficacy of using an EQS representation of a set of gene trees as an input for summary methods as well as the efficacy of using an EQS representation of the incomplete species trees used in supertree reconstruction. 

Sampling using an EQS is an alternative to sampling \emph{short quartets}, or those with a smaller diameter in the input tree, with larger probability than sampling quartets at random \protect\cite{warnow}. The short quartets approach prioritizes the inclusion of quartets thought to be accurately reconstructed, while our approach prioritizes selecting quartets that are guaranteed to retain the combinatorial features of the inferred tree.

\section{Efficient Quartet Systems}
\label{sec:EQS}

\subsection{Theoretical Properties of an EQS}

A \emph{phylogenetic tree} is a connected, undirected, acyclic graph in which the vertices of degree one (often called \emph{leaves}) are labeled by a set of taxa. We assume all trees in this paper are \emph{binary}, meaning that the internal vertices are of degree three.  Furthermore, we assume all trees in this paper are unrooted.  A \emph{quartet tree} is a phylogenetic tree with four leaves. We write $ab|cd$ when $a$ and $b$ form a \emph{cherry} of $T$. Figure~\protect\ref{fig:quartet} shows this tree.

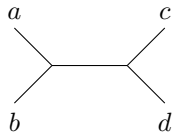
\begin{figure}[h]
\centering
\begin{tikzpicture}[scale=.5]
\draw (-1,0)--(1,0);
\draw (1,0)--(2,1);
\draw (1,0)--(2,-1);
\draw (-1,0)--(-2,1);
\draw (-1,0)--(-2,-1);
\node [above] at (2,1) {$c$};
\node [above] at (-2,1) {$a$};
\node [below] at (2,-1) {$d$};
\node [below] at (-2,-1) {$b$};
\end{tikzpicture}
\caption{Unrooted Quartet $ab|cd$}
\label{fig:quartet}
\end{figure}

A quartet is the fundamental unit of evolutionary information when working with methods based on time-reversible models of sequence evolution such as GTR \cite{GTR}. 

Let $T$ be a phylogenetic tree on a set of taxa $\mathcal{X}$. For every subset $\mathcal{S} \subseteq \mathcal{X}$, the \emph{induced subtree} $T|\mathcal{S}$ is the tree constructed by taking the unique minimal connected subgraph of $T$ containing the leaves in $\mathcal{S}$, and then removing all vertices of degree two. The \emph{support of a tree}, denoted by $supp(T)$, is the collection of taxa at the leaves of $T$. We say a tree $T_1$ \emph{displays} a tree $T_2$ if $T_1|supp(T_2)=T_2$. The set of quartets of a tree, denoted by $Q(T)$, is the collection of quartets displayed by $T$.

The following definition generalizes the notion of a quartet distinguishing an edge of a tree (cf. def. 6.8.3 in \protect\cite{semple}). 

\begin{dftn} A quartet $q=ab|cd$ \emph{distinguishes a path $p$} between internal vertices $v_1$ and $v_2$ of $T$ if the following three conditions are met:
\begin{enumerate}
\item $\{a,b\}$ and $\{c,d\}$ are subsets of different connected components of graph formed by removing the path $p$ from the tree $T$,
\item the path between $a$ and $b$ in $T$ passes through $v_1$, and 
\item the path between $c$ and $d$ in $T$ passes through $v_2$.
\end{enumerate} 
\end{dftn}
We define a representative subset of $Q(T)$ known as an \emph{Efficient Quartet System} (EQS) which is both definitive and contains a quartet which distinguishes a path between each pair of internal vertices of $T$. To construct an EQS we first assign to each internal vertex a \emph{representative set of taxa} (RST). To do so we first observe that each internal vertex on a binary tree partitions the taxa into three disjoint sets $S_1$, $S_2$, and $S_3$. 

\begin{dftn}
For a given ordering of the internal vertices of a tree, sequentially assign each vertex $v_i$ a three-element \emph{representative set of taxa} denoted by $RST(v_i)$ consisting of \emph{representative taxa} $rt_1(v_i) \in S_1(v_i)$, $rt_2(v_i) \in S_2(v_i)$, and $rt_3(v_i) \in S_3(v_i)$ which are the fewest number of edges from $v_i$. When there are multiple taxa satisfying these conditions we use the following tie-breaking procedure:
\begin{itemize}
\item Choose a taxon that is part of a cherry.
\item Select the taxon appearing in the most 
RSTs for the preceding vertices.
\item Select a taxon at random.
\end{itemize} 
\end{dftn}

The tie-breaking procedure in the second bullet point ensures Lemma \ref{lem:rtssmall} holds. 
\begin{lem}
\label{lem:rtssmall}
If $v_i$ and $v_j$ are adjacent internal vertices of a tree $T$, then $|RST(v_i) \cup RST(v_j)|=4$.
\end{lem}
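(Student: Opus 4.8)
The plan is to reduce the claim to a statement about the \emph{intersection} of the two representative sets. Since $|RST(v_i)| = |RST(v_j)| = 3$, inclusion--exclusion gives $|RST(v_i) \cup RST(v_j)| = 6 - |RST(v_i) \cap RST(v_j)|$, so proving the lemma is equivalent to showing that two adjacent internal vertices share \emph{exactly two} representative taxa.

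First I would set up the local geometry around the edge $e = v_i v_j$. Deleting $e$ splits $T$ into a subtree $T_i \ni v_i$ and a subtree $T_j \ni v_j$. The three edges at $v_i$ lead to two subtrees inside $T_i$, whose taxon sets I call $A$ and $B$, together with the direction toward $v_j$; after relabeling, $S_1(v_i), S_2(v_i)$ are $A, B$ while $S_3(v_i) = supp(T_j) = C \cup D$, where $C, D$ are the two subtrees hanging off $v_j$ inside $T_j$ (so $S_1(v_j), S_2(v_j)$ are $C, D$ and $S_3(v_j) = supp(T_i) = A \cup B$). The key metric observation is that any path crossing $e$ gains exactly one edge: the path from $v_j$ to a taxon $t \in A \cup B$ is one edge longer than the path from $v_i$ to $t$, and symmetrically for $t \in C \cup D$. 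Hence the ``fewest edges'' minimizer over each of the regions $A, B, C, D$ is the same whether measured from $v_i$ or from $v_j$, and the minimizer over the union $C \cup D$ (resp.\ $A \cup B$) lies on the same side from both endpoints.

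Writing $RST(v_i) = \{a, b, x\}$ with $a \in A$, $b \in B$ closest to $v_i$ and $x \in C \cup D$ closest to $v_i$, and $RST(v_j) = \{c, d, y\}$ analogously, the metric observation shows that $x$ is a closest taxon of $C \cup D$ from both endpoints, hence tied with $c$ (in $C$) or with $d$ (in $D$); symmetrically $y$ is tied with $a$ or $b$. Granting that the selections actually agree, that is $x \in \{c, d\}$ and $y \in \{a, b\}$, the intersection is immediate: $x$ coincides with one of $c, d$ and $y$ with one of $a, b$, these two are distinct because $C \cup D$ and $A \cup B$ are disjoint, and no further coincidences are possible since $A, B, C, D$ are pairwise disjoint. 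Thus $RST(v_i) \cap RST(v_j) = \{x, y\}$ has size two and the union has size four.

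The main obstacle is justifying the agreement $x \in \{c, d\}$ and $y \in \{a, b\}$ when ties occur, which is precisely the role flagged for the second tie-breaking bullet. I would argue that for the later-processed of the two vertices (say $v_i$), the set of candidate minimizers in the shared region coincides with the candidate set the earlier vertex $v_j$ faced when it selected $c$, $d$, and $y$; consequently the cherry rule produces identical survivors, and among any remaining ties the rule ``appearing in the most $RST$s for the preceding vertices'' favors the taxon already placed in $RST(v_j)$, forcing $v_i$ to reuse it. The delicate points to nail down are that these candidate sets genuinely coincide and that the interaction of the cherry rule with the reuse rule cannot select a taxon outside $\{c, d\}$ (resp.\ $\{a, b\}$); handling the degenerate configurations with several equidistant cherries is where the care is needed.
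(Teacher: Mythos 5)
Your proposal takes essentially the same route as the paper's proof: the paper likewise observes that every path from one endpoint of the shared edge to the far side passes through the other endpoint (so the candidate nearest taxa coincide up to a one-edge shift), invokes the second tie-breaking bullet to force the later-processed vertex to reuse representatives already placed in the earlier vertex's RST, and caps the overlap at two taxa because the remaining components are disjoint. The ``agreement'' step you flag as the main obstacle is precisely the step the paper dispatches in a single sentence by direct appeal to that bullet, so your sketch of it matches (and is if anything more careful than) the published argument.
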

\begin{proof}
Let $v_i$ and $v_j$ be adjacent internal vertices of a tree $T$.  Assume that $i$ precedes $j$  in the ordering of internal vertices used to compute the associated RST.  The removal of each vertex partitions the tree into three connected components. We assume that $S_1(v_i)$ (resp. $S_1(v_j)$) corresponds to the taxa in the connected component of the tree that contains $v_j$ (resp. $v_i$).  Since any path which from $v_i$ to a taxa in $S_1$ must pass first through $v_j$ it follows from the second bullet point in the tie-breaking procedure that $rt_1(v_i)$  must be either $rt_2(v_j)$ or $rt_3(v_j)$. Therefore  $rt_1(v_i) \in RST(v_j)$.  A similar argument shows  that $rt_1(v_j) \in RST(v_i)$.  The proof then follows from noting that $|RST(v_i) \cup RST(v_j)|>3$ since both $rt_2(v_j)$ and $rt_3(v_j)$ are elements of $S_1(v_i)$ and thus cannot both be in $RST(v_i)$.
\end{proof}

Given a choice of RSTs, we construct a collection of quartets known as \emph{efficient quartets}.

\begin{dftn}
Given a pair of internal vertices $v_i$ and $v_j$ and a RST, the associated \emph{efficient quartet} is the unique quartet $q=ab|cd$ such that $supp(q) \subset RST(v_i) \cup RST (v_j)$, and which distinguishes the path between $v_i$ and $v_j$.  Explicitly, $a=rt_2(v_i),b=rt_3(v_i),c=rt_2(v_j)$, and $d=rt_3(v_j)$.
\end{dftn}

\begin{dftn}
Given a fixed RST of $T$, an \emph{Efficient Quartet System} (EQS) of $T$-denoted $E(T)$-is the set of all possible efficient quartets associated to $T$.
\end{dftn}
Since the RST is dependent on both an ordering of internal vertices and the potential random selection of taxa there is not a unique EQS associated to $T$ thus we refer to \emph{an} EQS rather than \emph{the} EQS.  However, it is important to note that the cardinality of an EQS $E(T)$ for any tree $T$ is independent of these choices.

\begin{lem}\label{lem:EQS}
In a tree $T$ with $n$ taxa and any EQS representation $E(T)$ of $T$, $|E(T)| = {n-2 \choose 2}$.
\end{lem}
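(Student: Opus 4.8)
The plan is to show that the assignment sending each unordered pair $\{v_i,v_j\}$ of distinct internal vertices of $T$ to its efficient quartet is a bijection onto $E(T)$; the value $\binom{n-2}{2}$ then drops out of counting internal vertices. By the definition of an efficient quartet, every pair $\{v_i,v_j\}$ determines exactly one quartet $ab|cd$ with $a=rt_2(v_i)$, $b=rt_3(v_i)$, $c=rt_2(v_j)$, $d=rt_3(v_j)$, and $E(T)$ is by definition the image of this assignment. So it suffices to (i) count the internal vertices and (ii) prove the assignment is injective.

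For (i) I would use the standard degree count for binary unrooted trees. If $T$ has $n$ leaves of degree one and $I$ internal vertices of degree three, then $T$ has $I+n$ vertices and hence $I+n-1$ edges, while summing degrees gives $n+3I=2(I+n-1)$. Solving yields $I=n-2$, so there are exactly $\binom{n-2}{2}$ unordered pairs of internal vertices and therefore $|E(T)|\le\binom{n-2}{2}$.

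The heart of the argument, and the step I expect to be the \emph{main obstacle}, is injectivity: distinct pairs must never collapse to the same quartet. I would prove this by recovering the pair from the quartet alone using medians. An efficient quartet for $\{v_i,v_j\}$ is built from the two representatives of $v_i$ pointing away from $v_j$ (namely $a$ and $b$, which lie in the two components of $T\setminus v_i$ not containing $v_j$) together with the two representatives of $v_j$ pointing away from $v_i$. Consequently the taxon $c$ is reached from $v_i$ only by first passing through $v_j$, so $a$, $b$, and $c$ occupy the three distinct components of $T\setminus v_i$; the median (Steiner point) of $\{a,b,c\}$ in $T$ is thus exactly $v_i$. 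The symmetric argument gives that the median of $\{c,d,a\}$ is $v_j$. Since the split $ab|cd$ records which taxa are grouped, both medians are determined by the quartet, and hence $\{v_i,v_j\}$ is reconstructible from any efficient quartet. Two pairs producing the same quartet must therefore coincide, so the assignment is a bijection and $|E(T)|=\binom{n-2}{2}$.
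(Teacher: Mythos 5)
Your proof is correct and follows essentially the same route as the paper, which simply observes that there is exactly one efficient quartet per pair of internal vertices and leaves the count $\binom{n-2}{2}$ (with $n-2$ internal vertices) as clear. Your degree-sum count of internal vertices and the median-based recovery of $\{v_i,v_j\}$ from the quartet are just careful fillings-in of details the paper's one-line proof takes for granted.
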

\begin{proof}
This is clear because there is one efficient quartet for each pair of internal vertices in the tree. \end{proof}
So, as $|Q(T)|={n \choose 4}$ and $|E(T)| = {n-2 \choose 2}$, reconstruction pipelines incorporating an EQS will require the use of $O(n^{2})$ fewer quartets.

Small definitive systems of quartets are strong candidates for supertree inputs as they retain all of the combinatorial information from the input trees. One example of a family of small definitive systems of quartets is known as \emph{linked systems}, introduced in \protect\cite{linked}.

\begin{dftn}[\cite{linked}]
Given a subset of quartets $L \subset Q(T)$ of $n-3$ quartet trees, define the associated graph $G_T(L)$ with vertex set $V$ and edge set $E$ as follows:
\begin{itemize}
\item the vertex set $V$ is the set of all quartet trees $q \in L$ which distinguish a unique edge in $T$, and
\item vertex pairs $\{q_i,q_j\}$ are connected by an edge $e \in L$ if the edge $e_i$ that $q_i$ distinguishes is adjacent to the edge $e_j$ that $q_j$ distinguishes and $|supp(\{q_i,q_j\})|=5$.
\end{itemize}
Two quartets are \emph{linked} if their vertices are connected in $G_T(L)$. The system of quartet trees $L$ is \emph{a linked system} if $G_T(L)$ is connected. 
\end{dftn}

\begin{prop}\label{prop:EQS}
An EQS is a \textit{definitive} set of quartets.
\end{prop}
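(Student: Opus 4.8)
The plan is to locate, inside any EQS $E(T)$, a \emph{linked system} $L$, and then invoke the fact from \cite{linked} that linked systems are definitive. Since $T$ displays every efficient quartet by construction, $T$ displays $E(T)$, giving existence of a displaying tree. For uniqueness, if $L \subseteq E(T)$ is a linked system then $T$ is the unique tree displaying $L$, and any tree displaying the larger set $E(T)$ must display $L$ and hence equal $T$. So the whole argument reduces to producing $L$ and verifying the two defining conditions of a linked system.

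For $L$ I would take the subcollection of efficient quartets associated to \emph{adjacent} pairs of internal vertices, i.e.\ one quartet per internal edge of $T$. A binary tree on $n$ leaves has exactly $n-3$ internal edges, so $|L| = n-3$, the correct cardinality. First I would check that each such quartet distinguishes a \emph{unique} edge: for adjacent $v_i,v_j$ the efficient quartet is $rt_2(v_i)\,rt_3(v_i)\mid rt_2(v_j)\,rt_3(v_j)$, and since $rt_2(v_i),rt_3(v_i)$ lie in the two components of $v_i$ pointing away from $v_j$ (and symmetrically for $v_j$), the two meeting points of the quartet are $v_i$ and $v_j$ themselves, so the middle path joining them is the single edge $v_iv_j$. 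Hence every quartet of $L$ lies in the vertex set of $G_T(L)$ and distinguishes its own edge.

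The substance of the argument, and the step I expect to be the main obstacle, is proving $G_T(L)$ connected, which comes down to the support condition. Two internal edges are adjacent precisely when they share an internal vertex, and I want every such adjacent pair $e_i=\{v,w\}$, $e_j=\{v,u\}$ to be joined in $G_T(L)$; for that I must show $|supp(q_i)\cup supp(q_j)|=5$. By Lemma \ref{lem:rtssmall} applied to $(v,w)$ and to $(v,u)$, each union $RST(v)\cup RST(w)$ and $RST(v)\cup RST(u)$ has four elements, and since an efficient quartet support is a four-element subset of it, we get $supp(q_i)=RST(v)\cup RST(w)$ and $supp(q_j)=RST(v)\cup RST(u)$. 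The key computation is that $supp(q_i)\cap supp(q_j)=RST(v)$, a three-element set: the representative of $v$ pointing along $e_i$ coincides, by the very tie-breaking that drives Lemma \ref{lem:rtssmall}, with one of $w$'s two representatives pointing away from $v$, and symmetrically on the $u$ side, so $e_i$ and $e_j$ each contribute exactly one taxon outside $RST(v)$, and these two taxa sit in disjoint subtrees. Inclusion--exclusion then yields $|supp(q_i)\cup supp(q_j)| = 4+4-3 = 5$. I anticipate the delicate point to be making this ``coincides by tie-breaking'' claim fully rigorous, as it depends on representatives being selected consistently across neighboring vertices, exactly as in the proof of Lemma \ref{lem:rtssmall}.

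Finally, since every adjacent pair of internal edges contributes a link, $G_T(L)$ is precisely the line graph of the tree induced on the internal vertices of $T$; as that internal tree is connected, so is its line graph (the small cases $n\le 5$ being immediate), and therefore $L$ is a linked system. Appealing to \cite{linked}, $L$, and hence $E(T)$, is definitive.
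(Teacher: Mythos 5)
Your proposal is correct and takes essentially the same route as the paper's own proof: both extract the $n-3$ efficient quartets distinguishing the internal edges of $T$, use Lemma~\ref{lem:rtssmall} to verify the five-taxon support condition so that this subset is a linked system, invoke Theorem 3.1 of \cite{linked} for definitiveness, and then lift the conclusion to all of $E(T)$ because $T$ displays every efficient quartet. The only difference is one of detail: where you compute $supp(q_i)\cap supp(q_j)=RST(v)$ exactly (via the disjoint-subtree argument for the two extra taxa), the paper more tersely bounds $|supp(q_i)\cup supp(q_j)|\le 5$ from the lemma and deduces equality from $q_i\ne q_j$.
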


\begin{proof}
Let $E(T)$ be an EQS for a tree $T$. We denote by $L(T)$ to be the subset of $E(T)$ of size $n-3$ of quartets that distinguish edges of $T$. Let $q_i$ and $q_j$ be elements of $L(T)$ which distinguish adjacent edges. It follows from Lemma~\ref{lem:rtssmall} that $|supp(q_i) \cup supp(q_j)|\le5$. This must in fact be an equality since $q_i \ne q_j$. Thus $L(T)$ is a linked system of quartets. It follows from Theorem 3.1 in \protect\cite{linked} that $L(T)$ is a definitive set of quartets. Since each quartet in $E(T)$ is displayed by a tree, then $T$ must be the unique tree that displays the quartets in $L(T)$. Therefore $E(T)$ is a definitive set of quartets.
\end{proof}

Along with Lemma \ref{lem:EQS}, Proposition \ref{prop:EQS} indicates that the notion of an EQS  satisfies theoretical properties that merit exploration in phylogenetic inference pipeline applications. 

\section{Experimental Properties of an EQS}
\label{sec:analysis}
In our first experiment we choose an input tree $T$ and ask if QMC returns a tree similar to $T$ given the input $E(T)$. This baseline test indicates that not too much information is being lost in the conversion of input data from $T$ to $E(T)$. In the second and third experiments we test the utility of adding an EQS component to pipelines for species tree construction and supertree construction.  We use wQMC in both of these experiments.

\subsection{Baseline Finding}
\label{sec:baseline}
Given a tree $T$, we denote by $QMC(E(T))$ the tree constructed by applying $QMC$ to the efficient representation $E(T)$ of $T$.
To measure the amount of information lost by using $E(T)$ to represent $T$ and reconstruct $T$ using QMC, we generated a tree $T$ and then computed the Normalized Robinson-Foulds (RF) distance \cite{robinson} between $T$ and $QMC(E(T))$. The Robinson-Foulds distance measures the number of bipartitions (also known as splits) of the taxa that appear in one tree but not the other. As a result RF distances tend to be larger for trees with more taxa. To account for this we report the Normalized Robinson-Foulds distance because it scales the RF distance by the maximum possible RF distance. Since QMC is a heuristic algorithm with no theoretical guarantees, it is both impossible to provide a proof that $QMC(E(T))=T$ for all trees $T$ and unreasonable to expect that the equality $QMC(E(T))=T$ would be observed in practice.

We used the $R$ package \emph{ape} to simulate unrooted, binary trees on increasing numbers of taxa and without assigned branch lengths \cite{ape}. For each tree size between $100$ and $1000$, we generated $1000$ trees under the Yule-Harding distribution with the \emph{rmtree} command in \emph{ape}. For each tree we computed the Normalized Robinson-Foulds distance between $T$ and $QMC(E(T))$. We also report the percentage of time when $QMC(E(T))$ is the original tree $T$.  The results of this study are displayed in Table~\ref{baseline}.

\begin{table}[tp]
\centering
\begin{tabular}{l|ccc}
\# Taxa & $T = QMC(E(T))$ & Normalized RF distance\\
\hline
100 & 99.5\% & 0.01\%\\
\hline
200 & 98.3\% &  0.02 \%\\
\hline
300 & 95.3\% &  0.07 \%\\
\hline
400 & 95.3\% &  0.05 \%\\
\hline
500 & 92.7\% &  0.06 \%\\
\hline
600 & 82.8\% &  0.10 \%\\
\hline
700 & 84.2\% &  0.10 \%\\
\hline
800 &85.0\% &  0.08 \%\\
\hline
900 & 78.8\% &  0.10 \%\\
\hline
1000 & 79.0\% &  0.09 \%\\
\hline
\end{tabular}
\caption{Comparison between $T$ and $QMC(E(T))$ for 1000 trees generated under the Yule-Harding model. Normalized RF distances are reported as the mean over all $1000$ trees in each data set.}
\label{baseline}
\end{table}

It is plausible that $QMC$ could return $T$ given even smaller input sets than $E(T)$. However, we ran this analysis using $QMC$ applied to linked systems $L(T)$, which are definitive but contain only $n-3$ quartets.  On the same $100$ taxa samples used in Table~\ref{baseline}, $QMC(L(T))$ returns the correct tree $0.0\%$ of the time and has a Normalized RF distance of $61.45\%$ in comparison to a $.01\%$ error observed for $QMC(E(T))$. 

Given that $QMC(L(T))$ fails to represent the combinatorial information in $T$ we do not report the similarly poor performance of $QMC(L(T))$ in our subsequent experiments. Our hypothesis is that the problem in applying $QMC$ to linked systems, or other small definitive quartet systems, is that not enough information remains in the divide-and-conquer step of QMC to reconstruct a tree. Therefore, the trees returned by $QMC(L(T))$ are largely unresolved. While it is possible to quickly reconstruct an accurate tree from $L(T)$ (p. 139 \cite{semple}) this does not appear to be easy using an algorithm which can also handle incompatible quartets.

\subsection{Application of an EQS in a Pipeline Using wQMC as a Summary Method} 
\label{sec:summary}
A \emph{summary method} is a method for estimating a species history on a set of $n$ species $X$ that has two steps.  First, gene trees are inferred from multiple loci using a tree-inference method.  Then a set of gene trees $\mathcal{G}$ from each locus are combined into a species tree $S$. See \cite{NJst} and  ASTRAL-II \cite{mirarab2015astral} for examples of promising (in terms of running time and accuracy) summary methods that will complete their analyses on datasets containing about 50 taxa in a \emph{reasonable} amount of time, e.g. less than 24 hours on a typical laptop or desktop machine purchased after 2011.

The quartet-agglomeration method wQMC \cite{WQMC} is a modification to QMC \cite{snir} that allows the quartets input to the method to be assigned weights by the user. Therefore wQMC can be used as a summary method, and this use of wQMC has been shown to yield reasonably accurate results on simulated datasets \cite{Davidson2015}. To use wQMC in this way, one first computes the set of all quartets $q$ that are displayed by each gene tree in a set of gene trees $\mathcal{G}$, and then computes the frequency with which $q$ appears in $\mathcal{G}$, which we denote by $w(q,\mathcal{G})$. 

A computational challenge arises in this approach due to the growth rate of the binomial coefficient ${n \choose 4}$. One approach to deal with this obstacle is to provide the quartet-agglomeration method with a selection of randomly sampled subsets of quartets. For example, this is the approach used in the experimental study in \cite{SVD} to complete an analysis on 52 taxa. This approach is also used in \cite{supertree, WQMC} and \cite{QMC}. But it is unknown what phylogenomic signal is lost when using randomly sampled quartets to represent the information in the dataset; this motivates the incorporation of an EQS representation of gene trees into summary method pipelines.

We demonstrate the use of EQS representations of gene trees by using wQMC as a summary method. We first compute EQS representations of each gene tree in a set $\mathcal{G}$, and then use the EQS representations of the trees in $\mathcal{G}$ to compute the quartet frequencies used as weights for wQMC. This approach was also used in \cite{Davidson2015}, but with using all quartets displayed by the gene trees. We chose a data set with 51 taxa from \cite{mirarab2015astral} for this experiment because

\begin{enumerate}
    \item datasets with this number of taxa are large enough to begin to pose computational difficulties such as those encountered in \cite{SVD} (in which they used QMC and ran most of their analyses on desktop machines without appealing to access to high-performance computing resources), and 
    \item datasets of interest to biologists regularly contain at least 50 taxa.  
\end{enumerate}

In particular, we use the dataset on 51 taxa from \cite{mirarab2015astral}, which was generated using the program SimPhy \cite{SimPhy}. This dataset is described in detail in the original paper \cite{mirarab2015astral} and linked to on the website for the supporting online materials for this paper. But briefly, we mention that the dataset contains 50 replicates, each containing 1000 true gene trees simulated on a model species tree under the MSC model as well as 1000 gene trees estimated from sequences simulated on the true gene trees. Our experiment uses the true gene trees to both \begin{enumerate}
\item maintain similarity with the baseline experiment in Section \ref{sec:baseline}, and
    \item avoid the introduction of gene tree estimation error in the first analysis of the EQS approach. Gene tree estimation error can only be bounded in the presence of a molecular clock \cite{esterror}, which cannot be assumed in the context of methods that use unrooted trees as inputs.  
    
\end{enumerate}

For the \emph{Control Version} of this experiment we first computed the set of all quartets displayed by each true gene tree in the set $\mathcal{G}$ for each replicate in the data set using custom scripts based on software developed for \cite{Johansen13computingtriplet}, and then computed the frequency $w(q,\mathcal{G})$ with which each quartet appeared in $\mathcal{G}$.  

In the \emph{Efficient Version} of this experiment, we tested the efficacy of an EQS for conserving the information of the gene trees in a summary method. We first found an EQS representing each true gene tree for each replicate in the data set, combined the resulting quartets for each replicate, which we denote as $Q_{E}$, and then computed the frequency $w_{E}(q,\mathcal{G})$ with which each quartet in this reduced set of quartets appeared. 

In each version of this experiment, the sets of frequencies from $w(q,\mathcal{G})$ (respectively $w_{E}(q,\mathcal{G})$) were given as user-assigned weights to quartets for wQMC and wQMC was used to infer the final species tree.

We measure accuracy for the summary method experiments using the proportion of splits in the simulated true species tree that are missing from the species trees estimated by the summary method or pipeline.  We refer to this as the \emph{missing branch rate}.

\subsubsection{Comparison of Experimental Results}

We note that the frequency of a quartet in $\mathcal{G}$ will clearly be lower using an EQS for each true gene tree in $\mathcal{G}$, but this only affects the weight of the quartet that must be sent to wQMC at the end of the species tree pipeline.  In addition, the upper bounds for the number of quartets which must be extracted for each gene tree is reduced from ${51 \choose 4}$ to ${49 \choose 2}$.

Using EQS representations of the trees in the set $\mathcal{G}$ did effectively reduce the number of quartets input to wQMC in our experiment. In the Control Version the average number of quartets across all 50 replicates given to wQMC as the input set was 649,474, while in the Efficient Version the average number of quartets across all 50 replicates given to wQMC as the input set was 155,511. This is a significant reduction in the number of quartets. As shown in Table \ref{tab:accuracy}, the large reduction in the number of quartets in the Efficient Version of the experiment did result in a minor reduction in accuracy. Our experiments indicate quartet-based summary method pipelines incorporating EQS representations of gene trees reduces the total number of quartets derived from the original gene trees but does not lead to a significant reduction in accuracy, therefore preserving most of the phylogenomic signal.

We re-ran the analyses of the dataset using ASTRAL-II version 4.7.8 to verify the results in the original paper \cite{mirarab2015astral}. Our mean missing branch rate across all 50 replicates for this dataset essentially matches that reported in \cite{mirarab2015astral}. However, in \cite{mirarab2015astral}, results were given in charts with labeled axes instead of precise numerical accuracies, which prevents exact comparison of the results.  We also compared the timing of our experiment to the timing of ASTRAL-II. ASTRAL-II does not require the pre-processing of quartets and follows a different algorithmic paradigm, so the only time reported is the species-tree estimation. Tables \ref{tab:real}, \ref{tab:user}, and \ref{tab:sys} show that the approximately six-fold reduction in the number of quartets sent to wQMC reduces the running time of wQMC by about 50\% in both real and system timing.  In addition, the accuracies shown in Table \ref{tab:accuracy} for the Efficient and Control Versions of our experiment are competitive with ASTRAL-II.  

This may have implications for other quartet-agglomeration methods. For example, another such method, Quartets FM (QFM), named after Fiduccia and Mattheyses (who introduced an algorithm known as FM for partitioning hypergraphs \cite{FM}), was introduced in \cite{QFM}. The simulations in \cite{QFM} showed improved accuracy over QMC but incurred a cost of a much slower running time. The reduction in time shown with EQS representations of trees in combination wQMC indicates that incorporating the use of an EQS may boost the timing performance of QFM in a similar pipeline. 

\begin{table}[tp]
\centering
\begin{tabular}{l|c | c}
Method & Real Timing & Real Timing \\ & (Quartets Pre-processing) & (Species-tree Estimation)  \\\hline
ASTRAL-II 4.7.8 & N/A & 36m50.100s \\\hline
Efficient Version &  689m29.315s & 0m39.604s \\\hline
Control Version & 875m2.449s  & 1m19.076s  \\\hline
\end{tabular}
\caption{Real Timing Results for EQS and wQMC vs. ASTRAL-II on a 51-taxon dataset from \cite{mirarab2015astral}. Timing data reported represents the mean time to complete all computations for a single replicate in the dataset.}
\label{tab:real}
\end{table}

\begin{table}[tp]
\centering
\begin{tabular}{l|c | c}
Method  & User Timing & User Timing \\ & (Quartets Pre-processing) &  (Species-tree estimation) \\\hline
ASTRAL-II 4.7.8  & N/A &  36m18.873s\\\hline
Efficient Version  &  608m51.966s & 0m36.769s \\\hline
Control Version & 2433m32.509s &  0m59.574s  \\\hline
\end{tabular}
\caption{User Timing Results for EQS and wQMC vs. ASTRAL-II 4.7.8 on a 51-taxon dataset from \cite{mirarab2015astral}.  Timing data reported represents the mean time to complete all computations for a single replicate in the dataset.}
\label{tab:user}
\end{table}

\begin{table}[tp]
\centering
\begin{tabular}{l|c | c}
Method  & System Timing & System Timing \\ & (Quartets Pre-processing) &  (Species-tree Estimation) \\\hline
ASTRAL-II 4.7.8 & N/A & 0m25.483s \\\hline
Efficient Version   & 41m53.718s & 0m0.315s   \\\hline
Control Version & 238m18.885s & 0m0.612s  \\\hline
\end{tabular}
\caption{System Timing Results for Efficient and Control Versions vs. ASTRAL-II 4.7.8 on a 51-taxon dataset from \cite{mirarab2015astral}.  Timing data reported represents the mean time to complete all computations for a single replicate in the dataset.}
\label{tab:sys}
\end{table}

\begin{table}[tp]
\centering
\begin{tabular}{l|c}
Method  & Accuracy \\\hline
ASTRAL-II 4.7.8 & 0.009576 \\\hline
Efficient Version & 0.015414  \\\hline
Control Version & 0.009992 \\\hline
\end{tabular}
\caption{Accuracy Results for Efficient and Control Versions vs. ASTRAL-II 4.7.8 on a 51-taxon dataset from \cite{mirarab2015astral}. Accuracy is given by the mean missing branch rate across all 50 replicates in the dataset.}
\label{tab:accuracy}
\end{table}

\subsection{Application of an EQS to the use of wQMC as a Supertree Method} 
\label{sec:supertree}

When reconstructing the evolutionary history of large and diverse samples of taxa, one must combine information from a variety of input trees into one large supertree reflecting the history of all taxa under consideration. Quartet-based algorithms such as wQMC can be used to combine these input trees into one large supertree. However, the MaxCut algorithm may fail to complete in a reasonable time if the number of taxa studied is over $500$ when using all quartets, or over $1000$ when using randomly sampled quartets \protect\cite{supertree}. Therefore, we restrict our analysis to the application of the MaxCut algorithm to an EQS-based pipeline rather than the full collection of quartets.  

An experimental methodology for testing supertree reconstruction algorithms was developed in \protect\cite{swenson2010simulation}. Swenson et al. used a sophisticated protocol to generate simulated input trees mimicking the process that a computational biologist would use to construct a supertree. For each supertree they constructed input trees which reflected the process of estimating $25$ clade-based trees and a single scaffold tree from DNA sequences which evolved along the corresponding induced subtree of the supertree. A scaffold tree contains a more disparate set of species and is meant to help glue the clade trees together. We emphasize that the clade and scaffold trees were not constructed under the MSC model, so if the gene trees are accurately estimated all quartets on these trees should have the same topology as the corresponding quartets on the supertrees. In this study (and in practice) there are errors in estimating the clade and scaffold trees from the sequence data.

We use the data from this study to test the accuracy of wQMC when applied to an EQS in the case when the true species tree has $1000$ taxa.  The density of the taxa which were included in the scaffold tree ranged from $20\%$ to $100\%$ \protect\cite{swenson2010simulation}.

Swenson et al. \cite{swenson2010simulation} compared MRP \protect\cite{ragan1992phylogenetic} using the incomplete species trees as inputs, and a combined analysis using maximum likelihood that reconstructed the species tree directly after concatenating the DNA sequence data. Methods were evaluated based on speed and on the Normalized RF distance rate between the reconstructed tree and the true species tree.

We did not investigate different weighting systems for wQMC in the supertree and baseline experiments, but instead if a quartet appears in an EQS representation of $l$ trees it receives a weight of $l$. Since the input data is in terms of unrooted tree topologies, in each case we analyze the results in terms of the topological distance between the model tree and the reconstructed tree.

Table~\protect\ref{tab:super} shows the Normalized RF distance rate when wQMC is applied to quartets derived from an EQS in comparison with the results found in \protect\cite{supertree}. Differences in computing power prevent a precise comparison between the running times published in \protect\cite{supertree} and wQMC applied used with an EQS-based approach. As a rough comparison, wQMC using an EQS to represent the input trees returns a supertree on $1000$ taxa in approximately $5$ minutes. The equivalent process was reported to take $1$ hour and $47$ minutes using MRP and almost $31$ hours when using the combined analysis with maximum likelihood \protect\cite{supertree}. We note that the algorithms have similar performance when the scaffold density is at least $50\%$, and the use of wQMC with an EQS-based approach results in decreased accuracy when the scaffold density is only $20\%$.

\begin{table}[tp]
\centering
\begin{tabular}{l|cccc}
Scaffold Density & $wQMC(E(T))$ & MRP* & Combined Analysis with ML* \\
\hline
$20\%$ & 43.2\% &23\% & 14\% \\
\hline
$50\%$ & 22.5\% & 21\% & 15\% \\
\hline
$75\%$ & 14.5\% & 18\% & 13\% \\
\hline
$100\%$ & 12.6\% & 15\% & 13\% \\
\hline
\end{tabular}
\caption{Normalized RF distance rate between true supertrees and trees reconstructed using wQMC with an EQS-based approach, MRP, and a combined analysis using maximum likelihood. Reported error is the mean over ten $1000$-taxon supertrees. Reported accuracies from both the MRP and Combined Analyses are estimated from Figure 5 of \protect\cite{swenson2010simulation}. }
\label{tab:super}
\end{table}

\section{Conclusion and Future Work}
\label{sec:conclusion}
As quartets remain a common input for summary and supertree methods one should carefully consider which sets of quartets best reflect the input data. In this article we demonstrate that an EQS theoretically encodes all of the data of the input tree, and in practice contains enough information for a fast heuristic algorithm to reconstruct over $99.9 \%$ of the topological data of the original input trees.

When an EQS is incorporated into a quartet-based summary method pipeline, the data loss from our initial study appears to be insignificant in terms of accuracy measured by the missing branch rate. Since computing the EQS representations of 50,000 gene trees (from the 50 replicates in the dataset) resulted in a large reduction of the number of quartets necessary to compute a species tree, this reduction has potential to assist other quartet-agglomeration techniques such as the QFM method and quartet-puzzling. As the size of datasets grows and newer methods such as SVDquartets 
\cite{SVD} require the combination of ever-increasing amounts of quartets, reduction in quartet set size has potential to enable the adoption of methods that can bypass gene tree estimation error.

When used in a supertree reconstruction, wQMC in combination with an EQS representation of the input trees has comparable performance to MRP and a combined analysis using Maximum Likelihood when the scaffold density is at least $50\%$. The performance is not as strong when the scaffold density is only $20\%$. The decrease in accuracy of wQMC when using a low scaffold density is offset by the dramatic increase in speed. This preliminary analysis shows that wQMC in combination with an EQS should be considered as a potential supertree reconstruction pipeline when large numbers of taxa need to be considered. However, care should be taken to ensure sufficient taxon coverage. It may also be possible that the low accuracy could be corrected by using multiple scaffold trees with lower taxon coverage.

As our timing data shows, pre-processing of quartets is a noteworthy component of quartet-based summary method pipelines such as the one presented in this paper. Analyzing all quartets is infeasible due to computational constraints in many situations, but using smaller definitive systems such as linked systems can be ineffective when using algorithms such as QMC and wQMC that must also handle incompatible quartets. The approach of incorporating an EQS lies in between and may be an appropriate standard tool when one needs to identify a representative set of quartets.  We conclude with open questions, which we believe may help to further improve the field.

\begin{ques}
Can one modify the QMC algorithm to ensure that $T=QMC(E(T))$ or that it has the property $T=QMC(f(T))$ where $f(T)$ is a definitive set of quartets on the order of $n$?
\end{ques}

\begin{ques}

Could quartet-based summary and supertree methods be improved by using weighting functions that account for potential error in gene tree estimation, or account for implementation-based biases in algorithms such as wQMC known to have no theoretical guarantees, but good performance in data simulation tests \cite{Davidson2015}?

\end{ques}

\begin{ques}\label{ques:QFM}
Recently, QFM has been re-implemented in the beta-testing version of PAUP* \cite{PAUP} with a refined implementation \cite{Swofford} over the original implementation in \cite{QFM}. Could QFM in combination with EQS produce better results than with wQMC? 
\end{ques}

\begin{ques}
Networks provide an alternative combinatorial framework to trees for describing complex evolutionary histories \protect\cite{bryant2004neighbor, huson2006application, huson2010phylogenetic}. Recent work suggests that displayed quartets can be an important tool in phylogenetic network reconstruction \protect\cite{mao2012quartet} which is more computationally demanding than the tree reconstruction considered here. Can the use of EQS representations of networks increase the scalability of quartet-based network reconstruction algorithms?
\end{ques}

\subsection{Description and Availability of Software}\label{Supporting}

Supporting materials, including the Efficient Quartets software developed by M.~Lawhorn and N.~Weber and the pipeline developed by R.~Davidson for incorporating the use of an EQS into a summary method pipeline, are available at the websites http://goo.gl/TSFzeD and https://github.com/redavids/efficientquartets.

\section*{Acknowledgment}
The authors wish to thank Laura Brunner for her assistance in running the supertree reconstruction analysis at the University of Wisconsin-Stevens Point. The authors also wish to thank the anonymous referees of this paper for many helpful comments and suggestions.

Research reported in this publication was supported by an Institutional Development Award (IDeA) from the National Center for Research Resources (5 P20 RR016461) and the National Institute of General Medical Sciences (8 P20 GM103499) from the National Institutes of Health. R.D. was supported by NSF grant DMS-1401591. This material is based upon work supported by the National Science Foundation under Grant No. DMS-1616186.

\ifCLASSOPTIONcaptionsoff
 \newpage
\fi

\bibliographystyle{IEEEtran}
\bibliography{sample}

\begin{thebibliography}{10}
\providecommand{\url}[1]{#1}
\csname url@samestyle\endcsname
\providecommand{\newblock}{\relax}
\providecommand{\bibinfo}[2]{#2}
\providecommand{\BIBentrySTDinterwordspacing}{\spaceskip=0pt\relax}
\providecommand{\BIBentryALTinterwordstretchfactor}{4}
\providecommand{\BIBentryALTinterwordspacing}{\spaceskip=\fontdimen2\font plus
\BIBentryALTinterwordstretchfactor\fontdimen3\font minus
  \fontdimen4\font\relax}
\providecommand{\BIBforeignlanguage}[2]{{%
\expandafter\ifx\csname l@#1\endcsname\relax
\typeout{** WARNING: IEEEtran.bst: No hyphenation pattern has been}%
\typeout{** loaded for the language `#1'. Using the pattern for}%
\typeout{** the default language instead.}%
\else
\language=\csname l@#1\endcsname
\fi
#2}}
\providecommand{\BIBdecl}{\relax}
\BIBdecl

\bibitem{muscle}
R.~C. Edgar, ``{MUSCLE}: multiple sequence alignment with high accuracy and
  high throughput,'' \emph{Nucleic Acids Research}, vol.~32, no.~5, pp.
  1792--1797, 2004.

\bibitem{ClustalOmega}
F.~Sievers, A.~Wilm, D.~Dineen, T.~J. Gibson, K.~Karplus, W.~Li, R.~Lopez,
  H.~McWilliam, M.~Remmert, J.~S{\"o}ding \emph{et~al.}, ``Fast, scalable
  generation of high-quality protein multiple sequence alignments using
  {C}lustal {O}mega,'' \emph{Molecular Systems Biology}, vol.~7, no.~1, p. 539,
  2011.

\bibitem{PRANK}
A.~L{\"o}ytynoja and N.~Goldman, ``Phylogeny-aware gap placement prevents
  errors in sequence alignment and evolutionary analysis,'' \emph{Science},
  vol. 320, no. 5883, pp. 1632--1635, 2008.

\bibitem{RAxML}
A.~Stamatakis, ``{RA}x{ML} {V}ersion 8: a tool for phylogenetic analysis and
  post-analysis of large phylogenies,'' \emph{Bioinformatics}, vol.~30, pp.
  1312--1313, 2014.

\bibitem{FastTree2}
M.~Price, P.~S. Dehal, and A.~P. Arkin, ``{FastTree 2}: approximately
  maximum-likelihood trees for large alignments,'' \emph{{PLOS} {{ONE}}},
  vol.~5, p. e9490, 2010.

\bibitem{NJ}
N.~Saitou and M.~Nei, ``The neighbor-joining method: a new method for
  reconstructing phylogenetic trees.'' \emph{Molecular Biology and Evolution},
  vol.~4, no.~4, pp. 406--425, 1987.

\bibitem{FastME2}
V.~Lefort, R.~Desper, and O.~Gascuel, ``{FastME 2.0}: a comprehensive,
  accurate, and fast distance-based phylogeny inference program,''
  \emph{Molecular Biology and Evolution}, vol.~32, no.~10, pp. 2798--2800,
  2015.

\bibitem{Delusion}
M.~S. Springer and J.~Gatesy, ``The gene tree delusion,'' \emph{Molecular
  Phylogenetics and Evolution}, vol.~94, pp. 1--33, 2016.

\bibitem{maddison1997gene}
W.~P. Maddison, ``Gene trees in species trees,'' \emph{Systematic Biology},
  vol.~46, no.~3, pp. 523--536, 1997.

\bibitem{Kingman82}
J.~F.~C. Kingman, ``On the genealogy of large populations,'' \emph{Journal of
  Applied Probability}, vol. 19A, pp. 27--43, 1982.

\bibitem{PamiloNei}
P.~Pamilo and M.~Nei, ``Relationships between gene trees and species trees,''
  \emph{Molecular Biology and Evolution}, vol.~5, pp. 568--583, 1988.

\bibitem{Tajima83}
F.~Tajima, ``Evolutionary relationship of {DNA} sequences in finite
  populations,'' \emph{Genetics}, vol. 105, pp. 437--460, 1983.

\bibitem{IdentifyingRooted}
E.~Allman, J.~Rhodes, and J.~Degnan, ``Identifying the rooted species tree from
  the distribution of unrooted gene trees under the coalescent,'' \emph{Journal
  of Mathematical Biology}, vol.~62, pp. 833--862, 2011.

\bibitem{DegnanRosenberg06}
J.~H. Degnan and N.~A. Rosenberg, ``Discordance of species trees with their
  most likely gene trees,'' \emph{{PLOS} Genetics}, vol.~2, pp. 762--768, 2006.

\bibitem{mirarab2015astral}
S.~Mirarab and T.~Warnow, ``{ASTRAL-II}: coalescent-based species tree
  estimation with many hundreds of taxa and thousands of genes,''
  \emph{Bioinformatics}, vol.~31, no.~12, pp. i44--i52, 2015.

\bibitem{steel1992}
M.~Steel, ``The complexity of reconstructing trees from qualitative characters
  and subtrees,'' \emph{Journal of Classification}, vol.~9, no.~1, pp. 91--116,
  1992.

\bibitem{zimmermann2014bbca}
T.~Zimmermann, S.~Mirarab, and T.~Warnow, ``{BBCA}: Improving the scalability
  of {*BEAST} using random binning,'' \emph{BMC Genomics}, vol.~15, no. Suppl
  6, p. S11, 2014.

\bibitem{wickett2014phylotranscriptomic}
N.~J. Wickett, S.~Mirarab, N.~Nguyen, T.~Warnow, E.~Carpenter, N.~Matasci,
  S.~Ayyampalayam, M.~S. Barker, J.~G. Burleigh, M.~A. Gitzendanner
  \emph{et~al.}, ``Phylotranscriptomic analysis of the origin and early
  diversification of land plants,'' \emph{Proceedings of the National Academy
  of Sciences}, vol. 111, no.~45, pp. E4859--E4868, 2014.

\bibitem{origqpuz}
K.~Strimmer and A.~von Haeseler, ``Quartet puzzling: A quartet
  maximum-likelihood method of reconstructing tree topologies,''
  \emph{Molecular Biology and Evolution}, vol.~13, no.~7, 1996.

\bibitem{warnow}
S.~Snir, T.~Warnow, and S.~Rao, ``Short quartet puzzling: A new quartet-based
  phylogeny reconstruction algorithm,'' \emph{Journal of Computational
  Biology}, vol.~15, no.~1, pp. 91--103, 2008.

\bibitem{snir}
S.~Snir and S.~Rao, ``Quartets {M}ax{C}ut: a divide and conquer quartets
  algorithm,'' \emph{{IEEE/ACM} Transactions on Computational Biology and
  Bioinformatics ({TCBB})}, vol.~7, no.~4, pp. 704--718, 2010.

\bibitem{WQMC}
E.~Anvi, R.~Cohen, and S.~Snir, ``Weighted quartets phylogenetics,''
  \emph{Systematic Biology}, vol.~64, no.~2, pp. 233--242, 2015.

\bibitem{supertree}
M.~S. Swenson, R.~Suri, C.~R. Linder, and T.~Warnow, ``An experimental study of
  quartets {M}ax{C}ut and other supertree methods.'' \emph{Algorithms for
  Molecular Biology}, vol.~6, no.~1, p.~7, 2011.

\bibitem{ragan1992phylogenetic}
M.~A. Ragan, ``Phylogenetic inference based on matrix representation of
  trees,'' \emph{Molecular Phylogenetics and Evolution}, vol.~1, no.~1, pp.
  53--58, 1992.

\bibitem{bocker1999patching}
S.~B{\"o}cker, A.~W. Dress, and M.~A. Steel, ``Patching up {X-trees},''
  \emph{Annals of Combinatorics}, vol.~3, no.~1, pp. 1--12, 1999.

\bibitem{SVD}
J.~Chifman and L.~Kubatko, ``Quartet inference from {SNP} data under the
  coalescent model,'' \emph{Bioinformatics}, vol.~30, no.~23, pp. 3317--3324,
  2014.

\bibitem{linked}
E.~Moan and J.~Rusinko, ``Combinatorics of linked systems of quartet trees,''
  \emph{Involve, a Journal of Mathematics}, vol.~9, no.~1, pp. 171--180, 2015.

\bibitem{GTR}
S.~Tavar{\'e}, ``Some probabilistic and statistical problems in the analysis of
  {DNA} sequences,'' \emph{Lectures on Mathematics in the Life Sciences},
  vol.~17, pp. 57--86, 1986.

\bibitem{semple}
C.~Semple and M.~A. Steel, \emph{Phylogenetics}.\hskip 1em plus 0.5em minus
  0.4em\relax Oxford University Press, 2003, vol.~24.

\bibitem{robinson}
D.~Robinson and L.~R. Foulds, ``Comparison of phylogenetic trees,''
  \emph{Mathematical Biosciences}, vol.~53, pp. 131--147, 1981.

\bibitem{ape}
E.~Paradis, J.~Claude, and K.~Strimmer, ``{APE}: analyses of phylogenetics and
  evolution in {R} language,'' \emph{Bioinformatics}, vol.~20, no.~2, pp.
  289--290, 2004.

\bibitem{NJst}
L.~Liu and L.~Yu, ``Estimating species trees from unrooted gene trees,''
  \emph{Systematic Biology}, vol.~60, pp. 661--667, 2011.

\bibitem{Davidson2015}
R.~Davidson, P.~Vachaspati, S.~Mirarab, and T.~Warnow, ``Phylogenomic species
  tree estimation in the presence of incomplete lineage sorting and horizontal
  gene transfer,'' \emph{{BMC} {G}enomics}, vol.~16, no. Suppl 10, p.~S1, 2015.

\bibitem{QMC}
S.~Snir and S.~Rao, ``Quartet {M}ax{C}ut: A fast algorithm for amalgamating
  quartet trees,'' \emph{Molecular Phylogenetics and Evolution}, vol.~62, pp.
  1--8, 2012.

\bibitem{SimPhy}
D.~Mallo, L.~de~Oliveira~Martins, and D.~Posada, ``{SimPhy}: Phylogenomic
  simulation of gene, locus and species trees.'' \emph{Systematic Biology}, pp.
  1--26, 2015.

\bibitem{esterror}
S.~Roch and T.~Warnow, ``On the robustness to gene tree estimation error (or
  lack thereof) of coalescent-based species tree methods,'' \emph{Systematic
  Biology}, vol.~64, no.~4, pp. 663--676, 2015.

\bibitem{Johansen13computingtriplet}
J.~Johansen and M.~K. Holt, ``Computing triplet and quartet distances,''
  Master's thesis, Aarhus University, Department of Computer Science, 6 2013,
  advisor: Gerth St{\o}lting Brodal.

\bibitem{FM}
C.~Fiduccia and R.~Mattheyses, ``A linear time heuristic for improving network
  partitions,'' \emph{Proceedings of The 19th Design Automation Conference},
  1982.

\bibitem{QFM}
R.~Reaz, M.~S. Bayzid, and M.~S. Rahman, ``Accurate phylogenetic tree
  reconstruction from quartets: a heuristic approach,'' \emph{{PLOS} {ONE}},
  vol.~9, no.~8, 2014.

\bibitem{swenson2010simulation}
M.~S. Swenson, F.~Barban{\c{c}}on, T.~Warnow, and C.~R. Linder, ``A simulation
  study comparing supertree and combined analysis methods using {SMIDGen}.''
  \emph{Algorithms for Molecular Biology}, vol.~5, no.~8, 2010.

\bibitem{PAUP}
D.~Swofford, \emph{{PAUP*} 4.0 beta version, phylogenetic analysis using
  parsimony (and other methods)}.\hskip 1em plus 0.5em minus 0.4em\relax
  Sunderland, Massachusetts: Sinauer Associates, Incorporated, 1998.

\bibitem{Swofford}
------, ``Personal communication,'' \emph{May 19, 2015}, 2015.

\bibitem{bryant2004neighbor}
D.~Bryant and V.~Moulton, ``Neighbor-net: an agglomerative method for the
  construction of phylogenetic networks,'' \emph{Molecular Biology and
  Evolution}, vol.~21, no.~2, pp. 255--265, 2004.

\bibitem{huson2006application}
D.~H. Huson and D.~Bryant, ``Application of phylogenetic networks in
  evolutionary studies,'' \emph{Molecular Biology and Evolution}, vol.~23,
  no.~2, pp. 254--267, 2006.

\bibitem{huson2010phylogenetic}
D.~H. Huson, R.~Rupp, and C.~Scornavacca, \emph{Phylogenetic networks:
  concepts, algorithms and applications}.\hskip 1em plus 0.5em minus
  0.4em\relax Cambridge University Press, 2010.

\bibitem{mao2012quartet}
F.~Mao, D.~Williams, O.~Zhaxybayeva, M.~Poptsova, P.~Lapierre, J.~P. Gogarten,
  and Y.~Xu, ``Quartet decomposition server: a platform for analyzing
  phylogenetic trees,'' \emph{{BMC} {B}ioinformatics}, vol.~13, no.~1, p.~1,
  2012.

\end{thebibliography}

\end{document}